\newtheorem{theorem}{Theorem}
\newtheorem{lemma}[theorem]{Lemma}
\newcommand{\mund}{\mathunderscore}
\newcommand{\G}{\ensuremath{\mathcal{G}}\xspace}
\newcommand{\I}{\ensuremath{\mathcal{I}}\xspace}
\newcommand{\T}{\ensuremath{\mathcal{T}}\xspace}
\newcommand{\J}{\ensuremath{\mathcal{J}}\xspace}
\newcommand{\alga}[3]{\texttt{add\mund{}contact($#1$,\;$#2$,\;$#3$)}\xspace}
\newcommand{\algb}[4]{\texttt{can\mund{}reach($#1$,\;$#2$,\;$#3$,\;$#4$)}\xspace}
\newcommand{\algc}[2]{\texttt{is\mund{}connected($#1$,\;$#2$)}\xspace}
\newcommand{\algd}[4]{\texttt{reconstruct\mund{}journey($#1$,\;$#2$,\;$#3$,\;$#4$)}\xspace}
\newcommand{\pinsert}[3]{\textsc{insert($#1$,\;$#2$,\;$#3$)}\xspace}
\newcommand{\pprev}[2]{\textsc{find\mund{}prev($#1$,\;$#2$)}\xspace}
\newcommand{\pnext}[2]{\textsc{find\mund{}next($#1$,\;$#2$)}\xspace}
\newcommand{\pinsertname}{\textsc{insert}\xspace}
\newcommand{\pprevname}{\textsc{find\mund{}prev}\xspace}
\newcommand{\pnextname}{\textsc{find\mund{}next}\xspace}
\newcommand{\baccess}[2]{\textsc{access($#1$, $#2$)}\xspace}
\newcommand{\brank}[3]{\textsc{rank$_#1$($#2$, $#3$)}\xspace}
\newcommand{\bselect}[3]{\textsc{select$_#1$($#2$, $#3$)}\xspace}
\newcommand{\binsert}[3]{\textsc{insert$_#1$($#2$, $#3$)}\xspace}
\newcommand{\bupdate}[3]{\textsc{update$_#1$($#2$, $#3$)}\xspace}
\newcommand{\bremove}[2]{\textsc{remove($#1$, $#2$)}\xspace}
\newcommand{\binsertw}[3]{\textsc{insert\mund{}word$_{#1}$($#2$, $#3$)}\xspace}
\newcommand{\bremovew}[3]{\textsc{remove\mund{}word$_{#1}$($#2$, $#3$)}\xspace}
\newcommand{\bunsetrange}[1]{\textsc{unset\mund{}one\mund{}range($#1$, $j_1$, $j_2$)}\xspace}
\newcommand{\bjoin}[2]{\textsc{join($#1$, $#2$)}\xspace}
\newcommand{\bsplit}[2]{\textsc{split\mund{}at\mund{}jth\mund{}one($#1$, $#2$)}\xspace}
\newcommand{\bjoinname}{\textsc{join}\xspace}
\newcommand{\bsplitname}{\textsc{split\mund{}at\mund{}jth\mund{}one}\xspace}
\newcommand{\Btree}{B$^+$-tree\xspace}
\newcommand{\Btrees}{B$^+$-trees\xspace}
\newcommand{\bv}{bit-vector\xspace}
\newcommand{\bvs}{bit-vectors\xspace}
\tikzset{
vertex/.style={
		draw,
		fill=white,
		circle,
		semithick,
		inner sep=0pt,
		minimum size=7pt,
		font=\footnotesize,
	},
arc/.style={
-{Latex[length=5pt]},
semithick,
font=\scriptsize,
},
}
\newcommand{\drawintervalsubfig}[3]{
	\tikzset{
		cell/.style={
				draw,
				font=\scriptsize,
				minimum width=0.5cm,
				minimum height=0.45cm
			}
	}

	\pgfmathsetmacro{\lastdep}{0}
	\pgfmathsetmacro{\lastarr}{0}
	\foreach \dep\arr\lab[
		evaluate=\dep as \xdep using \dep*0.5,
		evaluate=\arr as \xarr using \arr*0.5
	] in #1 {
			\pgfmathtruncatemacro\dstart{\lastdep+1}
			\pgfmathtruncatemacro\dend{\dep-1}
			\ifnum \dstart<\dep
				\foreach \i [evaluate=\i as \xi using \i*0.5] in {\dstart, ..., \dend} {
						\node (n\i1) [cell] at (\xi, -0.7) {\texttt{0}};
					}
			\fi
			\node (n\dep1) [cell] at (\xdep, -0.7) {\texttt{1}};

			\pgfmathtruncatemacro\astart{\lastarr+1}
			\pgfmathtruncatemacro\aend{\arr-1}
			\ifnum \astart<\arr
				\foreach \i [evaluate=\i as \xi using \i*0.5] in {\astart, ..., \aend} {
						\node (n\i2) [cell] at (\xi, -1.4) {\texttt{0}};
					}
			\fi
			\node (n\arr2) [cell] at (\xarr, -1.4) {\texttt{1}};
			\global\let\lastdep=\dep
			\global\let\lastarr=\arr
		}

	\ifnum \lastdep<6
		\pgfmathtruncatemacro\dstart{\lastdep+1}
		\foreach \i [evaluate=\i as \xi using \i*0.5] in {\dstart, ..., 6} {
				\node (n\i1) [cell] at (\xi, -0.7) {\texttt{0}};
			}
	\fi

	\ifnum \lastarr<6
		\pgfmathtruncatemacro\astart{\lastarr+1}
		\foreach \i [evaluate=\i as \xi using \i*0.5] in {\astart, ..., 6} {
				\node (n\i2) [cell] at (\xi, -1.4) {\texttt{0}};
			}
	\fi

	\foreach \pos [evaluate=\pos as \x using \pos*0.5] in {1, ..., 6} {
			\node at (\x, -1.9) {\color{black!80}\tiny\pos};
		}

	\pgfmathsetmacro{\ys}{0.2}
	\pgfmathsetmacro{\mcolor}{0}
	\foreach \dep\arr\lab[count=\i] in #1 {
			\ifnum \i=1
				\ifnum #2=1
					\pgfmathsetmacro\mcolor{100}
				\fi
			\fi

			\draw[red!\mcolor!black,semithick] ($(n\dep1.north) + (0, \ys)$) -- ($(n\arr1.north) + (0, \ys)$) node[below right = -0.24 and 0.001] {\tiny$\mathcal{I}_\lab$};
			\pgfmathsetmacro\nextys{\ys+0.2}
			\global\let\ys=\nextys
		}

	\ifnum #3=1
		\node[left = 0.2 of n11] {\small $D$};
		\node[left = 0.2 of n12] {\small $A$};
	\fi
}
\title{Dynamic Compact Data Structure for Temporal Reachability with Unsorted Contact Insertions}
\author[1]{Luiz Fernando Afra Brito}
\author[1]{Marcelo Keese Albertini}
\author[1]{Bruno Augusto Nassif Travençolo}
\author[2]{Gonzalo Navarro}
\affil[1]{Faculty of Computer Science (FACOM), Federal University of Uberlândia, Uberlândia, Brazil}
\affil[2]{IMFD \& Department of Computer Science (DCC), University of Chile, Santiago, Chile}
\begin{document}

\maketitle

\begin{abstract}
	Temporal graphs represent interactions between entities over time.
	Deciding whether entities can reach each other through temporal paths is useful for various applications such as in communication networks and epidemiology.
	Previous works have studied the scenario in which addition of new interactions can happen at any point in time.
	A known strategy maintains, incrementally, a Timed Transitive Closure by using a dynamic data structure composed of $O(n^2)$ binary search trees containing non-nested time intervals.
	However, space usage for storing these trees grows rapidly as more interactions are inserted.
	In this paper, we present a compact data structures that represent each tree as two dynamic bit-vectors.
	In our experiments, we observed that our data structure improves space usage while having similar time performance for incremental updates when comparing with the previous strategy in temporally dense temporal graphs.
\end{abstract}

\section{Introduction}

Temporal graphs represent interactions between entities over time.
These interactions often appear in the form of contacts at specific timestamps.
Moreover, entities can also interact indirectly with each other by chaining several contacts over time.
For example, in a communication network, devices that are physically connected can send new messages or propagate received ones; thus, by first sending a new message and, then, repeatedly propagating messages over time, remote entities can communicate indirectly.
Time-respecting paths in temporal graphs are known as temporal paths, or simply journeys, and when a journey exists from one entity to another, we say that the first can reach the second.

In a computational environment, it is often useful to check whether entities can reach each other while using low space.
Investigations on temporal reachability have been used, for instance,
for characterizing mobile and social networks~\cite{tang2010characterising,Linhares2019a},
and for validating protocols and better understanding communication networks~\cite{cacciari1996atemporal, whitbeck2012temporal}.
Some other applications require the ability to reconstruct a concrete journey if one exists.
Journey reconstruction has been used in applications such as
finding and visualizing detailed trajectories in transportation networks~\cite{wu2017mining, betsy2007spatio, zeng2014visualizing},
and matching temporal patterns in temporal graph databases~\cite{vera2016querying,LVM18}.
In all these applications, low space usage is important because it allows the maintenance of larger temporal graphs in primary memory.

In~\cite{barjon2014testing,whitbeck2012temporal}, the authors considered updating reachability information given a chronologically sorted sequence of contacts.
In this problem, a standard Transitive Closure (TC) is maintained as new contacts arrive.
Differently, in~\cite{brito2022dynamic, wu2016reachability}, the authors studied the problem in which sequences of contacts may be chronologically unsorted and queries may be intermixed with update operations.
For instance, during scenarios of epidemics, outdated information containing interaction details among infected and non-infected individuals are reported in arbitrary order, and the dissemination process is continually queried in order to take appropriate measures against contamination~\cite{Ponciano2021,xiao2018reconstructing,enright2021deleting,rozenshtein2016reconstructing}.

Particularly to our interest, the data structure proposed by~\cite{brito2022dynamic} maintains a Timed Transitive Closure (TTC), a generalization of a TC that takes time into consideration.
It maintains well-chosen sets of time intervals describing departure and arrival timestamps of journeys in order to provide time related queries and enable incremental updates on the data structure.
The key idea is that, each set associated with a pair of vertices only contains non-nested time intervals and it is sufficient to implement all the TTC operations.
Our previous data structure maintains only $O(n^2\tau)$ intervals (as opposed to $O(n^2\tau^2)$) using $O(n^2)$ dynamic Binary Search Trees (BSTs).
Although the reduction of intervals is interesting, the space to maintain $O(n^2)$ BSTs containing $O(\tau)$ intervals each can still be prohibitive for large temporal graphs.

In this paper, we propose a dynamic compact data structure to represent TTCs incrementally while answering reachability queries.
Our new data structure maintains each set of non-nested time intervals as two dynamic \bvs, one for departure and the other for arrival timestamps.
Each dynamic \bv{} uses the same data layout introduced in~\cite{prezza2017framework}, which resembles a \Btree~\cite{bplustree} with static \bvs{} as leaf nodes.
In this work, we used a raw \bv{} representation on leaves that stores bits as a sequence of integer words.
In our experiments, we show that our new algorithms follow the same time complexities introduced in the previous section, however, the space to maintain our data structure is much smaller on temporally dense temporal graphs.
Encoding~\cite{elias1975universal} or packing~\cite{lemire2015decoding} the distance between 1's on leaves may improve the efficiency on temporally very sparse temporal graphs.

\subsection{Organization of the document}
This paper is organized as follows.
In Section~\ref{sec:background}, we briefly review the Timed Transitive Closure, the data structure introduced in~\cite{brito2022dynamic}, and the dynamic \bv proposed by~\cite{prezza2017framework}.
In Section~\ref{sec:compact-data-structure}, we describe our data structure along with the algorithms for each operation.
In Section~\ref{sec:experiments}, we conduct some experiments comparing our data structure with the previous work~\cite{brito2022dynamic}.
Finally, Section~\ref{sec:conclusions} concludes with some remarks and open questions such as the usage of an encoding or packing techniques for temporal very sparse temporal graphs.

\section{Background}\label{sec:background}

\subsection{Timed Transitive Closure}

Following the formalism in~\cite{casteigts2012time}, a temporal graph is represented by a tuple $\G = (V, E, \T, \rho, \zeta)$ where:
$V$ is a set of vertices;
$E \subseteq V \times V$ is a set of edges;
$\T$ is the time interval over which the temporal graph exists (lifetime);
$\rho: E \times \T \to\{0, 1\}$ is a function that expresses whether a given edge is present at a given time instant;
and $\zeta: E \times\T \mapsto \mathbb{T}$ is function that expresses the duration of an interaction for a given edge at a given time, where $\mathbb{T}$ is the time domain.
In this paper, we consider a setting where $E$ is a set of directed edges, $\mathbb{T}$ is discrete such that $\T = [1, \tau] \subseteq \mathbb{T}$ is the lifetime containing $\tau$ timestamps, and $\zeta = \delta$, where $\delta$ is any fixed positive integer.
Additionally, we call $(u, v, t)$ a contact in \G if $\rho((u, v), t) = 1$.

Reachability in temporal graphs can be defined in terms of journeys.
A journey from $u$ to $v$ in \G is a sequence of contacts $\J = \langle c_1, c_2, \ldots, c_k \rangle$, whose sequence of underlying edges form a valid $(u,v)$-path in the underlying graph $G$ and, for each contact $c_i = (u_i, v_i, t_i)$, it holds that $t_{i+1} \ge t_i + \delta$ for $i \in [1, k-1]$.
Throughout this article we use $departure (\J) = t_1$, and $arrival (\J) = t_{k} + \delta$.
Thus, a vertex $u$ can reach a vertex $v$ within time interval $[t_1, t_2]$ iff there exists a journey $\J$ from $u$ to $v$ such that $ t_1 \leq departure(\J) \leq arrival(\J) \leq t_2$.

\begin{figure}
	\centering
	\begin{tikzpicture}[scale=2]
		\node[vertex] at (162:8mm) (a) {}
		node[left = -2pt of a] {$a$};
		\node[vertex] at (90:7mm)  (b) {}
		node[above= -2pt of b] {$b$};
		\node[vertex] at (270:2mm) (c) {}
		node[below= -2pt of c] {$c$};
		\node[vertex] at (378:8mm) (d) {}
		node[right= -2pt of d] {$d$};
		\tikzstyle{every node}=[inner sep=1pt, font=\small]
		\draw (a) edge[arc, bend left=15] node[above left] {$1$} (b);
		\draw (a) edge[arc, bend right=15] node[below right] {$2$} (b);
		\draw (b) edge[arc] node[above right] {$3$} (d);
		\draw (c) edge[arc] node[below left] {$4$} (a);
		\draw (c) edge[arc] node[below right] {$5$} (d);
	\end{tikzpicture}~~~~
	\begin{tikzpicture}[scale=2]
		\node[vertex] at (162:8mm) (a) {}
		node[left = -2pt of a] {$a$};
		\node[vertex] at (90:7mm)  (b) {}
		node[above= -2pt of b] {$b$};
		\node[vertex] at (270:2mm) (c) {}
		node[below= -2pt of c] {$c$};
		\node[vertex] at (378:8mm) (d) {}
		node[right= -2pt of d] {$d$};
		\draw (a) edge[arc, bend left=15] node[above left] {$[1, 2]$} (b);
		\draw (a) edge[arc, bend right=15] node[below right= -5pt and 0] {$[2, 3]$} (b);
		\draw (a) edge[arc] node[below] {$[2, 4]$} (d);
		\draw (b) edge[arc] node[above right] {$[3, 4]$} (d);
		\draw (c) edge[arc] node[below left] {$[4, 5]$} (a);
		\draw (c) edge[arc] node[below right] {$[5, 6]$} (d);
	\end{tikzpicture}
	\caption{On the left, a temporal graph $\G$ on four vertices $V = \{ a, b, c, d \}$, where the presence times of edges are depicted by labels. For $\delta = 1$, this temporal graph has only two non-trivial journeys, \textit{i.e.} journeys with more than one contact, namely $\mathcal{J}_1 = \langle(a, b, 1), (b, d, 4)\rangle$ and $\mathcal{J}_2 = \langle(a, b, 2), (b, d, 4)\rangle$. On the right, the corresponding Timed Transitive Closure (TTC). Note that only the interval $\I_2 = [2, 4]$, regarding $\J_2$, is depicted on the edge from $a$ to $d$ because the other possibility, $\I_1 = [1, 4]$, regarding $\J_1$, encloses $I_2$. A query to check whether $a$ reaches $d$ within the time interval $\I_1$ can also be satisfied by using $\I_2$.}
	\label{fig:timed-transitive-closure}
\end{figure}
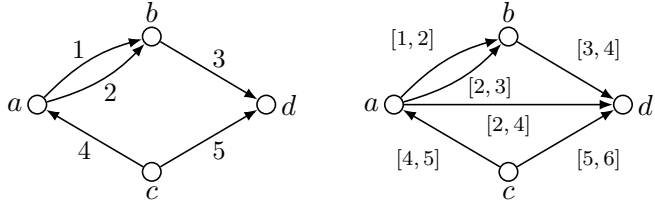

In~\cite{brito2022dynamic}, the authors introduced the Timed Transitive Closure (TTC), a transitive closure that captures the reachability information of a temporal graph within all possible time intervals.
Informally, the TTC of a temporal graph \G is a multigraph with time interval labels on edges.
Each time interval expresses the $departure(\J)$ and $arrival(\J)$ timestamps of a journey \J in \G as its left and right boundaries, respectively.
This additional information allows answering reachability queries parametrized by time intervals and also deciding if a new contact occurring anywhere in history can be composed with existing journeys.
Furthermore, a TTC needs at most $\tau$ edges (in the same direction) between two vertices instead of $\tau^2$ to perform basic operations.
The key idea is that each set of intervals from these edge labels can be reduced to a set containing only non-nested time intervals.
For instance, in the contrived example shown in Figure~\ref{fig:timed-transitive-closure}, we can see that, even though the information of an existing journey in the temporal graph was discarded in the corresponding TTC, a reachability query that could be satisfied by a ``larger'' interval can also be satisfied by a ``smaller'' nested interval.

Their data structure encodes TTCs as $n \times n$ matrices, in which every entry $(u,v)$ points to a self-balanced Binary Search Tree (BST) denoted by $T(u,v)$.
Each tree $T(u, v)$ contain up to $\tau$ intervals corresponding to the reduced edge labels from vertex $u$ to vertex $v$ in the TTC.
As all these intervals are non-nested, one can use any of their boundaries (departure or arrival) as sorting key.
This data structure supports the following operations:
\alga{u}{v}{t}, which updates information based on a new contact $(u, v, t)$;
\algb{u}{v}{t_1}{t_2}, which returns true if $u$ can reach $v$ within the interval $[t_1, t_2]$;
\algc{t_1}{t_2}, which returns true if $\mathcal{G}$, restricted to the interval $[t_1,t_2]$, is temporally connected, \textit{i.e.}, all vertices can reach each other within $[t_1, t_2]$; and
\algd{u}{v}{t_1}{t_2}, which returns a journey (if one exists) from $u$ to $v$ occurring within the interval $[t_1, t_2]$.
All these operations can be implemented using the following BST primitives, where $T_{(u, v)}$ is a BST containing reachability information regarding journeys from $u$ to $v$:
\pprev{T_{(u, v)}}{t}, which retrieves from $T_{(u, v)}$ the earliest interval $[t^-, t^+]$ such that $t^- \geq t$, if any, and nil otherwise; 
\pnext{T_{(u, v)}}{t}, which retrieves from $T_{(u, v)}$ the latest interval $[t^-, t^+]$ such that $t^+ \leq t$, if any, and nil otherwise; and
\pinsert{T_{(u, v)}}{t^-}{t^+}, which inserts into $T_{(u, v)}$ a new interval $\I = [t_1, t_2]$ if no other interval $\I'$ such that $\I \subseteq \I'$ exists while removing all intervals $\I''$ such that $\I \subseteq \I'' $.

The algorithm for \alga{u}{v}{t} manages the insertion of a new contact $(u, v, t)$ as follows.
First, the interval $[t, t + \delta]$, corresponding to the trivial journey \J from $u$ to $v$ with $departure(\J) = t$ and $arrival(\J) = t + \delta$, is inserted in $T_{(u,v)}$ using the \pinsertname primitive, which runs in time $O(\log\tau + d)$ where $d$ is the number of redundant intervals removed.
Then, the core of the algorithm consists of computing the indirect consequences of this insertion for the other vertices.
Their algorithm consists of enumerating these compositions with the help of the \pprevname and \pnextname primitives, which runs in time $O(\log\tau)$, and inserting them into the TTC using \pinsertname.
As there can only be one new interval for each pair of vertices, the algorithm takes $O(n^2\log\tau)$ amortized time.

The algorithm for \algb{u}{v}{t_1}{t_2} consists of testing whether $T_{(u, v)}$ contains at least one interval included in $[t_1, t_2]$.
The cost of this algorithm reduces essentially to calling \pnext{T_{(u, v)}}{t_1} once, which takes $O(\log\tau)$ time.
The algorithm for \algc{t_1}{t_2} simply calls \algb{u}{v}{t_1}{t_2} for every pair of vertices; therefore, it takes $O(n^2 \log\tau)$ time.
For \algd{u}{v}{t_1}{t_2}, an additional field successor must be included along every time interval indicating which vertex comes next in (at least one of) the journeys.
The algorithm consists of unfolding intervals and successors, one pair at a time using the \pnextname primitive, until the completion of the resulting journey of length $k$; therefore, it takes $O(k\log\tau)$ time in total.

\subsection{Dynamic \bvs}

A \bv $B$ is a data structure that holds a sequence of bits and provides the following operations:
\baccess{B}{i}, which accesses the bit at position $i$;
\brank{b}{B}{i}, which counts the number of $b$'s until (and including) position $i$; and
\bselect{b}{B}{j}, which finds the position of the $j$-th bit with value $b$.
It is a fundamental data structure to design more complex data structures such as compact sequence of integers, text, trees, and graphs~\cite{navarro2016compact,caro2016compressed}.
Usually, \bvs are static, meaning that we first construct the data structure from an already known sequence of bits in order to take advantage of these query operations.

Additionally, a dynamic \bv allows changes on the underlying bits.
Although many operations to update a dynamic \bv has been proposed, the following are the most commonly used:
\binsert{b}{B}{i}, which inserts a bit $b$ at position $i$;
\bupdate{b}{B}{i}, which writes the new bit $b$ to position $i$; and
\bremove{B}{i}, which removes the bit at position $i$.
Apart from these operations, there are others such as \binsertw{w}{B}{i}, which inserts a word $w$ at position $i$, and \bremovew{n}{B}{i}, which removes a word of $n$ bits from position $i$.

\begin{figure*}
	\centering
	\begin{tikzpicture}[
			inner sep = 0pt,
			marray/.style={matrix of nodes,
					nodes={rectangle,draw,minimum width=1.2em, inner sep = 2pt, font=\footnotesize},
					ampersand replacement=\&},
			mbv/.style={draw,minimum width=1.2em, inner sep = 2pt, font=\footnotesize },
			edge from parent/.style= {draw, edge from parent path={(\tikzparentnode) -- (\tikzchildnode)}} ]
		\tikzset{level 1/.style = {level distance = 40pt, sibling distance = 20pt}}
		\tikzset{level 2/.style = {level distance = 25pt, sibling distance = 10pt}}
		\Tree [.\node(a){\tikz{\matrix[marray] {$16$ \& $12$ \& $12$ \\ $3$ \& $2$ \& $5$ \\}}};
		[.\node(b){\tikz{\matrix[marray] {$4$ \& $4$ \& $4$ \& $4$ \\ $1$ \& $1$ \& $0$ \& $1$ \\}}};
		\node[mbv]{\texttt{1000}};
		\node[mbv]{\texttt{0010}};
		\node[mbv]{\texttt{0000}};
		\node[mbv]{\texttt{0100}}; ]
		[.\node(c){\tikz{\matrix[marray] {$4$ \& $4$ \& $4$ \\ $0$ \& $2$ \& $0$ \\}}};
		\node[mbv]{\texttt{0000}};
		\node[mbv]{\texttt{1010}};
		\node[mbv]{\texttt{0000}}; ]
		[.\node(d){\tikz{\matrix[marray] {$4$ \& $4$ \& $4$ \\ $3$ \& $1$ \& $1$ \\}}};
		\node[mbv]{\texttt{1011}};
		\node[mbv]{\texttt{1000}};
		\node[mbv]{\texttt{0001}}; ] ]

		\node[left= 1.5mm of a, yshift=2mm, font=\footnotesize] {$num = $};
		\node[left= 1.5mm of a, yshift=-2mm, font=\footnotesize] {$ones = $};
		\node[left= 1.5mm of b, yshift=2mm, font=\footnotesize] {$num = $};
		\node[left= 1.5mm of b, yshift=-2mm, font=\footnotesize] {$ones = $};
		\node[left= 1.5mm of c, yshift=2mm, font=\footnotesize] {$num = $};
		\node[left= 1.5mm of c, yshift=-2mm, font=\footnotesize] {$ones = $};
		\node[left= 1.5mm of d, yshift=2mm, font=\footnotesize] {$num = $};
		\node[left= 1.5mm of d, yshift=-2mm, font=\footnotesize] {$ones = $};
	\end{tikzpicture}
	\caption{A dynamic \bv using the data structure introduced in~\cite{prezza2017framework}. Leaves wrap static \bvs and internal nodes contain pointers to children along with the number of 1's and the total number of bits in each of them. The maximum number of pointers in each internal node $m$ and the length of each static \bv $n$ in this example is $4$.}
	\label{fig:dynamicbv}
\end{figure*}
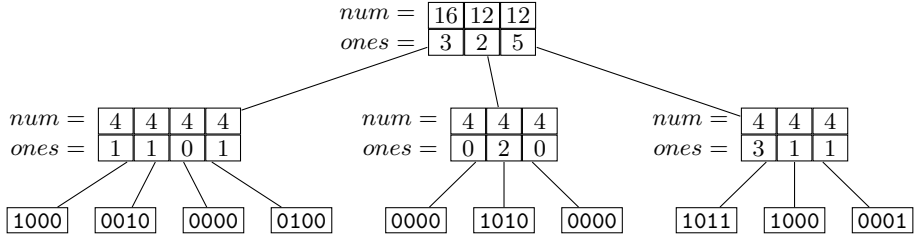

In~\cite{prezza2017framework}, the authors proposed a dynamic data structure for \bvs with a layout similar to \Btrees~\cite{bplustree}.
Leaves wrap static \bvs of maximum length $l$ and internal nodes contain at most $m$ pointers to children along with the number of 1's and the total number of bits in each subtree.
With exception to the root node, static \bvs have a minimum length of $\lceil\nicefrac{l}{2}\rceil$ and internal nodes have at least $\lceil\nicefrac{m}{2}\rceil$ pointers to children.
These parameters serve as rules to balance out tree nodes during insertion and removal of bits.
Figure~\ref{fig:dynamicbv} illustrates the overall layout of this data structure.

Any static \bv representation can be used as leaves, the simplest one being arrays of words representing bits explicitly.
In this case, the maximum length could be set to $l = \Theta(|w|^2)$, where and $|w|$ is the integer word size.
Other possibility is to represent \bvs sparsely by computing the distances between consecutive 1's and then encoding them using an integer compressor such as Elias-Delta~\cite{elias1975universal} or simply packing them using binary packing~\cite{lemire2015decoding}.
In this case, we can instead use as parameter the maximum number of 1's encoded by static \bvs to balance out leaves.

Their data structure supports the main dynamic \bv operations as follows.
An \baccess{B}{i} operation is done by traversing the tree starting at the root node.
In each node the algorithm searches from left to right for the branch that has the $i$-th bit and subtracts from $i$ the number of bits in previous subtrees.
After traversing to the corresponding child node, the new $i$ is local to that subtree and the search continues until reaching the leaf containing the $i$-th bit.
At a leaf node, the algorithm simply accesses and returns the $i$-th local bit in the corresponding static \bv.
If bits in static \bvs are encoded, an additional decoding step is necessary.

The \brank{b}{B}{i} and \bselect{b}{B}{j} operations are similar to \baccess{B}{i}.
For \brank{b}{B}{i}, the algorithm also sums the number of 1's in previous subtrees when traversing the tree.
At a leaf, it finally sums the number of 1's in the corresponding static \bv up to the $i$-th local bit using \texttt{popcount} operations, which counts the number of 1's in a word, and return the resulting value.
For \bselect{b}{B}{j}, the algorithm instead uses the number of 1's in each subtree to guide the search.
Thus, when traversing down, it subtracts the number of 1's in previous subtrees from $j$, and sums the total number of bits.
At a leaf, it searches for the position of the $j$-th local set bit using \texttt{clz} or \texttt{ctz} operations, which counts, respectively, the number of leading and trailing zeros in a word; sums it, and returns the resulting value.

The algorithm for \binsert{b}{B}{i} first locates the leaf that contains the static \bv with the $i$-th bit.
During this top-down traversal, it increments the total number of bits and the number of 1's, whether $b = 1$, in each internal node key associated with the child it descends.
Then, it reconstructs the leaf while including the new bit $b$.
If the leaf becomes full, the algorithm splits its content into two \bvs and updates its parent accordingly while adding a new key and a pointer to the new leaf.
After this step, the parent node can also become full and, in this case, it must also be split into two nodes.
Therefore, the algorithm must traverse back, up to the root node, balancing any node that becomes full.
If the root node becomes full, then it creates a new root containing pointers to the split nodes along with the keys associated with both subtrees.

The algorithm for \bremove{B}{i} also has a top-down traversal to locate and reconstruct the appropriate leaf, and a bottom-up phase to rebalance tree nodes.
However, internal node keys associated with the child it descends must be updated during the bottom-up phase since the $i$-bit is only known after reaching the corresponding leaf.
Moreover, a node can become empty when it has less than half the maximum number of entries.
In this case, first, the algorithm tries to share the content of siblings with the current node while updating parent keys.
If sharing is not possible, it merges a sibling into the current node and updates its parent while removing the key and pointer previously related to the merged node.
If the root node becomes empty, the algorithm removes the old root and makes its single child the new root.

The \bupdate{b}{B}{i} operation can be implemented by calling \bremove{B}{i} then \binsert{b}{B}{i}, or by using a similar strategy with a single traversal.

\section{Dynamic compact data structure for temporal reachability}\label{sec:compact-data-structure}

Our new data structure uses roughly the same strategy as in the previous work~\cite{brito2022dynamic}.
The main difference is the usage of a compact dynamic data structure to maintain a set of non-nested intervals instead of Binary Search Trees (BSTs).
This compact representation provides all BST primitives in order to incrementally maintain Temporal Transitive Closures (TTCs) and answer reachability queries. 
In~\cite{brito2022dynamic}, the authors defined them as follows, where $T_{(u, v)}$ represents a BST holding a set of non-nested intervals associated with the pair of vertices $(u, v)$.
(1) \pnext{T_{(u, v)}}{t} returns the earliest interval $[t^-, t^+]$ in $T_{(u, v)}$ such that $t^- \geq t$, if any, and nil otherwise;
(2) \pprev{T_{(u, v)}}{t} returns the latest interval $[t^-, t^+]$ in $T_{(u, v)}$ such that $t^+ \leq t$, if any, and nil otherwise; and
(3) \pinsert{T_{(u, v)}}{t^-}{t^+} inserts the interval $[t^-, t^+]$ in $T_{(u, v)}$ and performs some operations for maintaining the property that all intervals in $T_{(u, v)}$ are minimal.

For our new compact data structure, we take advantage that every set of intervals only contains non-nested intervals, thus we do not need to consider other possible intervals.
For instance, if there is an interval $\I = [4, 6]$ in a set, no other interval starting at timestamp $4$ or ending at $6$ is possible, otherwise, there would be some interval $\I'$ such that $\I' \subseteq \I$ or $\I \subseteq \I'$.
Therefore, we can represent each set of intervals as a pair of dynamic \bvs $D$ and $A$, one for departure and the other for arrival timestamps.
Both \bvs{} must provide the following low-level operations: \baccess{B}{i}, \brank{b}{B}{i}, \bselect{b}{B}{j}, \binsert{b}{B}{i}, and \bupdate{b}{B}{i}.

By using these simple \bvs{} operations, we first introduce algorithms for the primitives \pnext{(D, A)_{(u, v)}}{t}, \pprev{(D, A)_{(u, v)}}{t} and \pinsert{(D, A)_{(u, v)}}{t^-}{t^+} that runs, respectively, in time $O(\log{\tau})$, $O(\log{\tau})$ and $O(d\log{\tau})$, where $d$ is the number of intervals removed during an interval insertion.
Note that, now, these operations receive as first argument a pair containing two \bvs{} $D$ and $A$ associated with the pair of vertices $(u, v)$ instead of a BST $T_{(u, v)}$.
If the context is clear, we will simply use the notation $(D, A)$ instead of $(D, A)_{(u, v)}$.

Then, in order to improve the time complexity of \pinsert{(D, A)_{(u, v)}}{t^-}{t^+} to $O(\log{\tau} + d)$, we propose a new \bv{} operation: \bunsetrange{B}, which clears all bits in the range $[\bselect{1}{B}{j_1}, \bselect{1}{B}{j_2}]$.


\subsection{Compact representation of non-nested intervals}%
\label{ssec:compact-binary-tree}

Each set of non-nested intervals is represented as a pair of dynamic \bvs $D$ and $A$, one storing departure timestamps and the other arrival timestamps.
Given a set of non-nested intervals $\I_1, \I_2, \ldots, \I_k$, where $I_i = [d_i, a_i]$, $D$ contains 1's at every position $d_i$, and $A$ contains 1's at every position $a_i$.
Figure~\ref{fig:2bitvectors-example} depicts this representation.

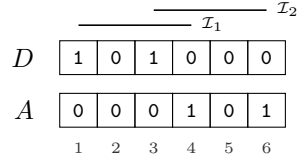
\begin{figure}
	\centering
	\begin{tikzpicture}
		\drawintervalsubfig{{1/4/1,3/6/2}}{0}{1}
	\end{tikzpicture}
	\caption{Representation of a set of non-nested interval using two \bvs, one for departures and the other for arrival timestamps. In this example, a set containing the intervals $[1, 4]$ and $[3, 6]$ is represented by the first \bv containing \texttt{1}'s at position $1$ and $3$, and the second \bv containing \texttt{1}'s at positions $4$ and $6$. Note that both \bvs must have the same number of \texttt{1}'s, otherwise, there would be an interval with missing values for departure or arrival.}
	\label{fig:2bitvectors-example}
\end{figure}

\subsection{Query algorithms}

Algorithms~\ref{alg:2bitvectors-2} and~\ref{alg:2bitvectors-3} answers the primitives \pprev{(D, A)}{t} and \pnext{(D, A)}{t}, respectively.
In order to find a previous interval, at line~1, Algorithm~\ref{alg:2bitvectors-2} first counts in $j$ how many 1's exist up to position $t$ in $A$.
If $j = 0$, then there is no interval $I = [t^-, t^+]$ such that $t^+ \leq t$, therefore, it returns nil.
Otherwise, at lines~4 and~5, the algorithm computes the positions of the $j$-th 1's in $D$ and $A$ to compose the resulting intervals.
In order to find a next interval, at line~1, Algorithm~\ref{alg:2bitvectors-3} first counts in $j'$ how many 1's exist up to time $t - 1$ in $D$.
If $j' = \brank{1}{D}{len(D)}$, then there is no interval $I' = [t'^-, t'^+]$ such that $t' \leq t^-$, therefore, it returns nil.
Otherwise, at lines~4 and~5, the algorithm computes the positions of the $(j' + 1)$-th 1's in $D$ and $A$ to compose the resulting interval.

\begin{algorithm}
	\caption{\pprev{(D, A)}{t}}\label{alg:2bitvectors-2}
	\begin{algorithmic}[1]
		\State{$j \gets \brank{1}{A}{t}$}
		\If{$j = 0$}
		\State{\textbf{return} nil}
		\EndIf{}

		\State{$t^- \gets \bselect{1}{D}{j}$}
		\State{$t^+ \gets \bselect{1}{A}{j}$}
		\State{\textbf{return} $[t^-, t^+]$}
	\end{algorithmic}
\end{algorithm}

\begin{algorithm}
	\caption{\pnext{(D, A)}{t}}\label{alg:2bitvectors-3}
	\begin{algorithmic}[1]
		\State{$j \gets \brank{1}{D}{t - 1}$}
		\If{$j = \brank{1}{D}{len(D)}$}
		\State{\textbf{return} nil}
		\EndIf{}

		\State{$t^- \gets \bselect{1}{D}{j + 1}$}
		\State{$t^+ \gets \bselect{1}{A}{j + 1}$}
		\State{\textbf{return} $[t^-, t^+]$}
	\end{algorithmic}
\end{algorithm}

As \brank{1}{B}{i} and \bselect{1}{B}{j} on dynamic \bvs{} have time complexity $O(\log{\tau})$ using the data structure proposed by~\cite{prezza2017framework}, \pprev{(D, A)}{t} and \pnext{(D, A)}{t} have both time complexity $O(\log{\tau})$.

\subsubsection{Interval insertion}

Due to the property of non-containment of intervals, given a new interval $\I = [t_1, t_2]$, we must first assure that there is no other interval $\I'$ in the data structure such that $\I \subseteq \I'$, otherwise, $\I$ cannot be present in the set.
Then, we must find and remove all intervals $\I''$ in the data structure such that $\I'' \subseteq \I$.
Finally, we insert $\I$ by setting the $t_1$-th bit of \bv{} $D$ and the $t_2$-th bit of $A$.
Figure~\ref{fig:2bitvectors-insertion} illustrates the process of inserting new intervals.

\begin{figure}
	\centering\small
	\begin{tabular}{cccc}
		\begin{tikzpicture} \drawintervalsubfig{{2/6/1}}{1}{1} \end{tikzpicture}  & \begin{tikzpicture} \drawintervalsubfig{{2/6/1}}{0}{0} \end{tikzpicture}  \\
		inserting $\I_1 = [2, 6]$                                                                                & inserting $\I_2 = [1, 6]$                                                                                \\
		\begin{tikzpicture} \drawintervalsubfig{{1/5/3,2/6/1}}{1}{1} \end{tikzpicture} & \begin{tikzpicture} \drawintervalsubfig{{3/4/4}}{1}{0} \end{tikzpicture} \\
		inserting $\I_3 = [1, 5]$                                                                                & inserting $\I_4 = [3, 4]$                                                                                \\
	\end{tabular}
	\caption{Sequence of insertions using our data structure based on \bvs $D$ and $A$. In~(a), our data structure is empty, thus, the insertion of interval $\I_1 = [2, 6]$ results in setting the position $2$ in $D$ and $6$ in $A$. Then, in~(b), the new interval $\I_2 = [1, 6]$ encloses $\I_1$, therefore, the insertion is skipped. Next, in~(c), no interval encloses or is enclosed by the new interval $\I_3 = [1, 5]$, thus, it suffices to set the position $1$ in $D$ and $5$ in $A$. Finally, in~(d), the new interval $\I_4 = [3, 4]$ is enclosed by $\I_1$ and $\I_3$, thus both of them is removed by clearing the corresponding bits and then $\I_4$ is inserted by setting the position $3$ in $D$ and $4$ in $A$.}
	\label{fig:2bitvectors-insertion}
\end{figure}
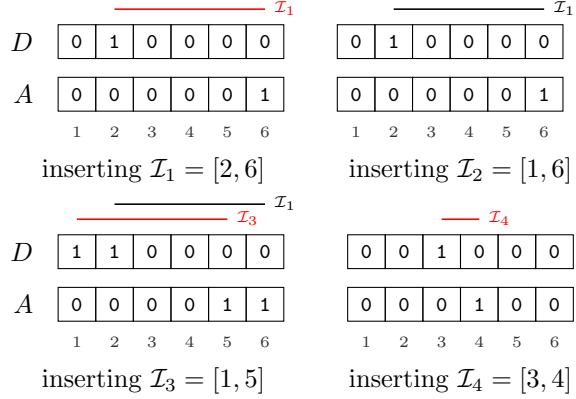

Algorithm~\ref{alg:2bitvectors-1} describes a simple process for the primitive \pinsert{(A, D)}{t_1}{t_2} in order to insert a new interval $\I = [t_1, t_2]$ into a set of non-nested intervals encoded as two \bvs{} $D$ and $A$.
At line~1, it computes how many 1's exist in $D$ prior to position $t_1$ by calling $r_d = \brank{1}{D}{t_1 - 1}$ and access the $t_i$-th bit in $D$ by calling $bit_d = \baccess{D}{t_1}$.
At line~2, it computes the same information with respect to the \bv{} $A$ and timestamp $t_2$ by calling $r_a = \brank{1}{A}{t_2 - 1}$ and $bit_a = \baccess{A}{t_2}$.
We note that the operations \brank{1}{B}{i} and \baccess{B}{i} can be processed in a single tree traversal using the dynamic \bv{} described in~\cite{prezza2017framework}.
If $r_d$ is less than $r_a + bit_a$, then there are more intervals closing up to timestamp $t_2$ than intervals opening before $t_1$, therefore, there is some interval $\I' = [d', a']$ such that $t_1 \leq d' \leq a' \leq t_2$, \textit{i.e.}, $\I \subseteq \I'$.
In this case, the algorithm stops, otherwise, it proceeds with the insertion.
When proceeding, if $r_d + bit_d$ is greater than $r_a$, then there are more intervals opening up to $t_1$ than intervals closing before $t_2$, therefore, there are $d = (r_d + bit_d) - r_a$ intervals $\I_i'' = [d_i'', a_i'']$, such that $d_i'' \leq t_1 \leq t_2 \leq a_i''$, \textit{i.e.}, $\I_i'' \subseteq \I$, that must be removed.
From lines~5 to~9, the algorithm removes the $d$ intervals that contain $I$ by iteratively unsetting their corresponding bits in $D$ and $A$.
In order to unset the $j$-th 1 in a \bv{} $B$, we first search for its position by calling $i = \bselect{1}{B}{j}$, then update $B[i] = 0$ by calling \bupdate{0}{B}{i}.
Thus, the algorithm calls \bupdate{0}{D}{\bselect{1}{D}{r_a + 1}} and \bupdate{0}{A}{\bselect{1}{A}{r_a + 1}} $d$ times to remove the $d$ intervals that closes after $r_a$.
Finally, at lines~10 and~11, the algorithm inserts $\I$ by calling \bupdate{1}{D}{t_1} and \bupdate{1}{A}{t_2}.
Note that both \bvs{} can grow with new insertions, thus we need to assure that both \bvs{} are large enough to accommodate the new $1$'s.
That is why the algorithm calls $ensureCapacity$ before setting the corresponding bits.
The $ensureCapacity$ implementation may call \binsert{0}{B}{len(B)} or \binsertw{0}{B}{len(B)} until $B$ has enough space.
Moreover, \brank{1}{B}{i} and \baccess{B}{i} operations can also receive positions that are larger than the actual length of $B$.
In such cases, these operations must instead return \brank{1}{B}{len(B)} and 0, respectively.

\begin{algorithm}
	\caption{\pinsert{(D, A)}{t_1}{t_2}}\label{alg:2bitvectors-1}
	\begin{algorithmic}[1]
		\State{$r_d \gets \brank{1}{D}{t_1 - 1}; \text{\ \,} bit_d \gets \baccess{D}{t_1}$}
		\State{$r_a \gets \brank{1}{A}{t_2 - 1}; \text{\ \,\,} bit_a \gets \baccess{A}{t_2}$}

		\If{$r_d \geq r_a + bit_a$}
		\If{$r_d + bit_d > r_a$}
		\State{$r_d^+ \gets r_d + bit_d$}
		\While{$r_d^+ > r_a$}
		\State{\bupdate{0}{D}{\bselect{1}{D}{r_a + 1}}}
		\State{\bupdate{0}{A}{\bselect{1}{A}{r_a + 1}}}
		\State{$r_d^+ \gets r_d^+ - 1$}
		\EndWhile{}
		\EndIf{}

		\State{$ensureCapacity(D, t_1); \text{\ \,} \bupdate{1}{D}{t_1}$}
		\State{$ensureCapacity(A, t_2); \text{\ \,} \bupdate{1}{A}{t_2}$}
		\EndIf{}
	\end{algorithmic}
\end{algorithm}

\begin{theorem}\label{theor:compact-insert-cost1}
	The update operation has worst-case time complexity $O(d\log\tau)$, where $d$ is the number of intervals removed.
\end{theorem}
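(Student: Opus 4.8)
The plan is to establish the bound by a direct line-by-line accounting of Algorithm~\ref{alg:2bitvectors-1}, resting on the single fact recorded earlier that on the dynamic \bv{} of~\cite{prezza2017framework} each of \baccess{B}{i}, \brank{b}{B}{i}, \bselect{b}{B}{j}, and \bupdate{b}{B}{i} runs in $O(\log\tau)$ time. First I would justify this uniform per-primitive cost in our setting: since every timestamp lies in $[1, \tau]$, both \bvs{} $D$ and $A$ have length $O(\tau)$, the tree of the Prezza layout has height $O(\log\tau)$, and each primitive performs a constant amount of work per node along a single root-to-leaf path.

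Next I would dispose of the early-exit case: if the test $r_d \ge r_a + bit_a$ on line~3 fails, the algorithm executes only lines~1 and~2, i.e.\ two rank/access pairs (each foldable into one traversal by the remark preceding the algorithm), costing $O(\log\tau)$ and halting with no removals. When the test passes, lines~1--2 still cost $O(\log\tau)$ and the running time is governed by the loop on lines~6--10. Its counter $r_d^+$ starts at $r_d + bit_d$ and is decremented once per pass while $r_d^+ > r_a$, so the loop runs exactly $(r_d + bit_d) - r_a = d$ times, which is by construction the number of intervals cleared; this identifies the loop count with the quantity $d$ in the statement. Each pass issues a constant number of primitives — two lookups via \bselect{1}{B}{r_a + 1} (on $D$ and on $A$) that locate the next interval to clear, together with two writes via \bupdate{0}{B}{i} that clear it — each $O(\log\tau)$, so the loop costs $O(d\log\tau)$. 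The two closing \bupdate{1}{D}{t_1} and \bupdate{1}{A}{t_2} calls on lines~11--12 add $O(\log\tau)$. Summing gives $O(\log\tau) + O(d\log\tau) + O(\log\tau) = O((d+1)\log\tau)$, which the statement abbreviates as $O(d\log\tau)$.

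The hard part will be making the $ensureCapacity$ calls on lines~11--12 compatible with a \emph{worst-case} (rather than merely amortized) guarantee, since extending a \bv{} one bit at a time via \binsert{0}{B}{len(B)} could cost $\Theta(\tau\log\tau)$ in a single invocation. Because the lifetime $\tau$ is fixed and known, the clean resolution I would adopt is to preallocate $D$ and $A$ to length $\tau$ at initialization, so that every write target $t_1, t_2 \le \tau$ is already in range and $ensureCapacity$ becomes a no-op of cost $O(1)$; the $O(d\log\tau)$ bound then holds verbatim in the worst case. I would additionally remark that, if one instead lets the \bvs{} grow, appending in word-sized blocks with \binsertw{0}{B}{len(B)} keeps the total capacity work at $O((\tau/|w|)\log\tau)$ over the whole execution — the lengths are monotone and capped at $O(\tau)$ — which is negligible per operation and again leaves the bound undisturbed.
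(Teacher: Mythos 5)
Your proof is correct and follows essentially the same route as the paper's: charge $O(\log\tau)$ per \bv{} primitive, observe that the loop on lines~6--9 executes exactly $d = (r_d + bit_d) - r_a$ times, and sum to $O((d+1)\log\tau)$. The one place you go beyond the paper is $ensureCapacity$: the paper simply amortizes it to $O(1)$ over a sequence of insertions (which sits a little awkwardly with the theorem's ``worst-case'' phrasing), whereas your preallocation of $D$ and $A$ to length $\tau$ removes the amortization entirely and makes the bound hold per invocation --- a legitimate tightening, not a gap.
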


\begin{proof}
    All operations on dynamic \bvs{} have time complexity $O(\log{\tau})$ using the data structure proposed by~\cite{prezza2017framework}.
	As the maximum length of each \bv{} is $\tau$, the cost of $ensureCapacity$ is amortized to $O(1)$ during a sequence of insertions.
	Therefore, the time complexity of \pinsert{(D, A)}{t_1}{t_2} is $O(d\log{\tau})$ since in the worst case Algorithm~\ref{alg:2bitvectors-1} removes $d$ intervals from line~6 to~9 before inserting the new one at lines~10 and~11.
\end{proof}

This simple strategy has a multiplicative factor on the number of removed intervals.
In general, as more intervals in $[1, \tau]$ are inserted, the number of intervals $d$ to be removed decreases, thus, in the long run, the runtime of this na\"{i}ve solution is acceptable.
However, when static \bvs{} are encoded sparsely as distances between consecutive 1's, it needs to decode/encode leaves $d$ times and thus runtime degrades severely.  
In the next section, we propose a new operation for dynamic \bvs{} using sparse static \bvs{} as leaves, \bunsetrange{B}, to replace this iterative approach and improve the time complexity of \pinsert{(D, A)}{t_1}{t_2} to $O(\log{\tau})$.

\subsection{New dynamic \bv operation to improve interval insertion}\label{ssec:new-bv-operations}

In this section, we propose a new operation \bunsetrange{B} for dynamic \bv{} using sparse static \bvs{} as leaves to improve the time complexity of \pinsert{(D, A)}{t_1}{t_2}.
This new operation clears all bits starting from the $j_1$-th 1 up to the $j_2$-th 1 in time $O(\log{\tau})$.
Our algorithm for \bunsetrange{B}, based on the split/join strategy commonly used in parallel programs~\cite{blelloch2016justJoin}, uses two internal functions \bsplit{N}{j} and \bjoin{N_1}{N_2}.
The \bsplit{N}{j} function takes a root node $N$ representing a dynamic \bv{} $B$ and splits its bits into two nodes $N_1$ and $N_2$ representing \bvs{} $B_1$ and $B_2$ containing, respectively, the bits in range $[1, \bselect{1}{B}{j} - 1]$ and $[\bselect{1}{B}{j}, len(B)]$.
The \bjoin{N_1}{N_2} function takes two root nodes $N_1$ and $N_2$, representing two \bvs{} $B_1$ and $B_2$ and constructs a new tree with root node $N$ representing a \bv{} $B$ containing all bits from $B_1$ followed by all bits from $B_2$.
The resulting trees for both functions must preserve the balancing properties of dynamic \bvs~\cite{prezza2017framework}.

Thus, given a dynamic \bv{} $B$ represented as a tree with root $N$, our algorithm for \bunsetrange{B} is described as follows.
First, the algorithm calls \bsplit{N}{j_1} in order to split the bits in $B$ into two nodes $N_{left}$ and $N_{tmp}$ representing two \bvs{} containing, respectively, the bits in range $[1, \bselect{1}{B}{j_1} - 1]$ and in range $[\bselect{1}{B}{j_1}, len(B)]$.
Then, it calls \bsplit{N_{tmp}}{j_2 - j_1} to split $N_{tmp}$ further into two nodes $N_{ones}$ and $N_{right}$ containing, respectively the bits in range $[\bselect{1}{B}{j_1}, \bselect{1}{B}{j_2} - 1]$, and $[\bselect{1}{B}{j_2}, len(B)]$.
The tree with root node $N_{ones}$ contains all 1's previously in the original dynamic \bv{} $B$ that should be cleared.
In the next step, the algorithm creates a new tree with root node $N_{zeros}$ containing $len(N_{ones})$ 0's to replace $N_{ones}$.
Finally, it calls \bjoin{\bjoin{N_{left}}{N_{zeros}}}{N_{right}} to join the trees with root nodes $N_{left}$, $N_{zeros}$, and $N_{right}$ into a final tree representing the original \bv{} $B$ with the corresponding 1's cleared.

Note that the tree with root $N_{ones}$ is still in memory, thus it needs some sort of cleaning.
The cost of immediately cleaning this tree would increase proportionally to the total number of nodes in $N_{ones}$ tree.
Instead, we keep $N_{ones}$ in memory and reuse its children lazily in other operations that request node allocations so that the cost of cleaning is amortized.
Moreover, even though we need to create a new \bv{} filled with zeros, this operation is performed in $O(1)$ time with a sparse implementation since only information about 1's is encoded.
We do not recommend using this strategy for a dense implementation, i.e, leaves represented as raw sequences of bits, since this last operation would run in time $O(\tau)$. 

Next we describe \bjoin{N_1}{N_2} and \bsplit{N}{j}.
The idea of \bjoin{N_1}{N_2} is to merge the root of the smallest tree with the correct node of the highest tree and rebalance the resulting tree recursively.

\begin{algorithm}[ht]
	\caption{\bjoin{N_1}{N_2}}\label{alg:join}
	\begin{algorithmic}[1]
		\If{$height(N_1) = height(N_2)$}
		\State{\Return $mergeOrGrow(N_1, N_2)$}
		\ElsIf{$height(N_1) > height(N_2)$}
		\State{$R \gets \bjoin{extractRightmostChild(N_1)}{N_2}$}
		\If{$height(R) = height(N_1)$}
		\State{\Return $mergeOrGrow(N_1, R)$}
		\EndIf{}
		\State{$insertRightmostChild(N_1, R)$}
		\State{\Return $N_1$}
		\Else{}
		\State{$R' \gets \bjoin{N_1}{extractLeftmostChild(N_2)}$}
		\If{$height(R') = height(N_2)$}
		\State{\Return $mergeOrGrow(R', N_2)$}
		\EndIf{}
		\State{$insertLeftmostChild(N_2, R')$}
		\State{\Return $N_2$}
		\EndIf{}
	\end{algorithmic}
\end{algorithm}

Algorithm~\ref{alg:join} details the \bjoin{N_1}{N_2} recursive function.
If $height(N_1) = height(N_2)$, at line~2, the algorithm tries to merge keys and pointers present in $N_1$ and $N_2$ if possible, or distributes their content evenly and grow the resulting tree by one level.
This process is done by calling $mergeOrGrow(N_1, N_2)$, which returns the root node of the resulting tree.
Instead, if $height(N_1) > height(N_2)$, at line~4, the algorithm first extracts the rightmost child from $N_1$, by calling $extractRightmostChild(N_1)$, and then recurses further passing the rightmost child instead.
The next recursive call might perform: a merge operation or grow the resulting subtree one level; therefore, the output node $R$ may have, respectively, height equals to $height(N_1) - 1$ or  $height(N_1)$.
If the resulting tree grew, \textit{i.e.}, $height(R) = height(N_1)$, then, at line~6, the algorithm returns the result of $mergeOrGrow(N_1, R)$.
Otherwise, if a merge operation was performed, \textit{i.e.}, $height(R) = height(N_1) - 1$, then, at line~7, it inserts $R$ into $N_1$ as its new rightmost child, and returns $N_1$.
Finally, if $height(N_1) < height(N_2)$, at line~10, the algorithm extracts the leftmost child from $N_2$ by calling $extractLeftmostChild(N_2)$ and recurses further passing the leftmost child instead.
Similarly, the root $R'$ resulted from the next recursive call might have height equals to $height(N_2) - 1$ or $height(N_2)$.
If $height(R') = height(N_2)$, then, at line~12, the algorithm returns the result of calling $mergeOrGrow(R, N_2)$, otherwise, if $height(R') = height(N_2) - 1$, then, at line~13, it inserts $R'$ into $N_2$ as its new leftmost child, and returns $N_2$.
Note that all subroutines must properly update keys describing the length and number of 1's of the \bv{} represented by the corresponding child subtree.
For instance, a call to $rightmost = extractRightmostChild(N)$ must decrement from the key associated with $N$ the length and number of 1's in the \bv{} represented by $rightmost$.

\begin{lemma}
	The operation \bjoin{N_1}{N_2} has time complexity $O(|height(N_1) - height(N_2)|)$.
\end{lemma}

\begin{proof}
	Algorithm~\ref{alg:join} descends at most $|height(N_1) - height(N_2)|$ levels starting from the root of the highest tree.
	At each level, in the worst case, it updates a node doing a constant amount of work equals to the branching factor of the tree.
	Therefore, the cost of \bjoin{N_1}{N_2} is $O(|height(N_1) - height(N_2)|)$.
\end{proof}

The idea of \bsplit{N}{j} is to traverse $N$ recursively while partitioning and joining its content properly until it reaches the node containing the $j$-th 1 at position \bselect{1}{B}{j}.
During the forward traversal, it partitions the current subtree in two nodes $N_1$ and $N_2$, excluding the entry associated with the child to descend.
Then, during the backward traversal, it joins $N_1$ and $N_2$, respectively, with the left and right nodes resulting from the recursive call.

\begin{algorithm}
	\caption{\bsplit{N}{j}}\label{alg:split}
	\begin{algorithmic}[1]
		\If{$N$ is leaf}
		\State{$(N_1, N_2) \gets partitionLeaf(N, j)$}
		\State{\Return{$(N_1, N_2)$}}
		\EndIf{}

		\State{$(N_1, child, N_2) \gets partitionNode(N, j)$}
		\State{$(N'_1, N'_2) \gets \Call{split}{child, j - ones(N_1)}$}
		\State{\Return{$(\Call{join}{N_1, N'_1}, \Call{join}{N'_2, N_2})$}}
	\end{algorithmic}
\end{algorithm}

The details of this function is shown in Algorithm~\ref{alg:split}.
From lines~1 to~3, the algorithm checks whether the root is a leaf.
If it is the case, it partitions the current \bv{} $B_1 \cdot{} b \cdot{} B_2$, where $b$ is the $j$-th 1, and returns two nodes containing, respectively, $B_1$ and $b \cdot{} B_2$.
Otherwise, from lines~4 to~6, the algorithm first finds the $i$-th child that contains the $j$-th 1 using a linear search and partitions the current node into three other nodes:
$N_1$, containing the partition with all keys and children in range $[1, i - 1]$;
$child$, which is the child node associated with position $i$; and
$N_2$, containing the partition with all keys and children in range $[i + 1, \ldots]$.
Then, at line~5, it recursively calls \bsplit{child}{j - ones(N_1)} to retrieve the partial results
$N'_1$ containing bits from $child$ up to the $j$-th 1; and
$N'_2$ containing bits from $child$ starting at the $j$-th 1 and forward.
Note that the next recursive call expects an input $j$ that is local to the root node $child$.
Finally, at line~6 it joins $N_1$ with $N'_1$ and $N'_2$ with $N_2$, and returns the resulting trees.

\begin{lemma}
	The operation \bsplit{N}{j} has time complexity $O(\log{\tau})$.
\end{lemma}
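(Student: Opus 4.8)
The plan is to bound the running time by the length of the single root-to-leaf path that \bsplitname traverses, plus the combined cost of all the \bjoinname calls performed on the way back up. Since each \bv{} has length at most $\tau$ and the underlying tree has constant branching factor (leaves holding $\Theta(l)$ bits and internal nodes with $\Theta(m)$ children), its height is $h = O(\log\tau)$. The forward traversal in Algorithm~\ref{alg:split} descends one level per recursive call and spends only $O(m) = O(1)$ time per node on $partitionNode$ (a linear scan over the children guided by the stored $ones$ counts), for a total of $O(\log\tau)$ before any joins are issued. The nontrivial part is the two \bjoinname calls executed at line~6 of every level: a naive bound charges up to $O(\log\tau)$ to each via the preceding lemma on \bjoinname, which would give an unacceptable $O(\log^2\tau)$.

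To obtain the sharper $O(\log\tau)$ bound I would exploit the geometry of the pieces produced along the path. Consider the left side; the right side is symmetric. As the recursion unwinds from the leaf back to the root, the accumulator is updated as $L \gets \bjoin{N_1}{L}$, where $N_1$ is the left partition of the node currently being processed. Let $p_k$ be the height of the $k$-th such $N_1$, ordered from the deepest level upward. Because these partitions are formed from subsets of the children of consecutive nodes on the path, whose heights are $1, 2, \ldots, h$, the sequence $p_1 \le p_2 \le \cdots$ is weakly monotone increasing with $p_k - p_{k-1} = O(1)$. Writing $\eta_k$ for the height of the accumulator after the $k$-th join, the Join lemma gives cost $|p_k - \eta_{k-1}|$ for that step and the standard bound $\max(p_k, \eta_{k-1}) \le \eta_k \le \max(p_k, \eta_{k-1}) + 1$. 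A short induction then shows $p_{k-1} \le \eta_{k-1} \le p_{k-1} + 1$, and combining this with $p_{k-1} \le p_k \le p_{k-1}+1$ yields $|p_k - \eta_{k-1}| = O(1)$. Hence every individual join costs $O(1)$, the $O(\log\tau)$ joins contribute only $O(\log\tau)$ in total, and the identical argument applied to $\bjoin{N'_2}{N_2}$ handles the right accumulator.

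I expect the main obstacle to be precisely this invariant $p_{k-1} \le \eta_{k-1} \le p_{k-1}+1$: it is what prevents the accumulated tree from ever running far ahead of (or behind) the next piece it is joined with, and it is the reason the costs collapse from a worst-case $O(\log^2\tau)$ to $O(\log\tau)$. Two points need care when writing it out. First, some partitions $N_1$ (or $N_2$) may be empty when the split point falls at the boundary of a node; such joins are $O(1)$ and leave the accumulator unchanged, so they break neither the monotonicity among the non-empty pieces nor the height invariant. Second, I must confirm that $partitionNode$ and the leaf base case correctly maintain the stored length and $ones$ keys, so that the pieces handed to \bjoinname are valid balanced trees and the Join lemma applies verbatim at each step; this is routine but should be stated. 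With these in place the total cost is $O(\log\tau) + O(\log\tau) = O(\log\tau)$.
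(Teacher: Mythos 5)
Your overall strategy coincides with the paper's: the descent itself costs $O(\log\tau)$, and the only danger is the $O(\log\tau)$ calls to \bjoinname{} on the way back up, which a naive application of the Join lemma would bound by $O(\log^2\tau)$; both you and the paper defeat this by observing that the relevant height differences essentially telescope along the root-to-leaf path. The paper compresses this into one sentence (``the sum of height differences for every level cannot be higher than the resulting tree height''), whereas you make the mechanism explicit via the accumulator-height invariant $p_{k-1}\le \eta_{k-1}\le p_{k-1}+1$; that elaboration is sound and genuinely clarifies why the bound holds.

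One intermediate claim is overstated, however. You assert that the non-empty left partitions satisfy $p_k - p_{k-1} = O(1)$ and conclude that \emph{every individual} join costs $O(1)$. That step fails when several consecutive partitions along the path are empty (the split point repeatedly falling at the boundary of a node): the next non-empty piece can then sit several levels above the last one that was joined, so that single join costs $\Theta(p_k - p_{k-1})$, which need not be constant. The conclusion is unaffected, because your own invariant still gives $|p_k-\eta_{k-1}|\le (p_k-p_{k-1})+1$, and the sum of these quantities over all joins telescopes to $O(h)=O(\log\tau)$ --- which is precisely the paper's aggregate argument. You should therefore phrase the final bound as a telescoping sum over the sequence of joins rather than as a uniform $O(1)$ cost per join.
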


\begin{proof}
	As \bjoin{N_1}{N_2} has cost $O(|height(N_1) - height(N_2)|)$ and the sum of height differences for every level cannot be higher than the resulting tree height containing $n < \tau$ nodes, the time complexity of \bsplit{N}{j} is $O(\log{\tau})$.
\end{proof}

Furthermore, since \bjoin{N_1}{N_2} outputs a balanced tree when concatenating two already balanced trees, both trees resulting from the \bsplit{N}{j} calls are also balanced.

\begin{lemma}\label{lemma:unsetrangecost}
	The operation \bunsetrange{B} has time complexity $O(\log{\tau})$ when $B$ encodes leaves sparsely.
\end{lemma}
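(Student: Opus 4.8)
The plan is to read the cost of \bunsetrange{B} straight off its decomposition into two splits, one constant-time construction of a zero block, and two joins, bounding each piece with the two lemmas already proved for \bsplitname and \bjoinname. Since those lemmas do the real work, the proof should amount to a careful addition of costs, with the sparse hypothesis entering at exactly one place.

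First I would account for the two split calls. The algorithm invokes \bsplit{N}{j_1}, producing $N_{left}$ and $N_{tmp}$, and then \bsplit{N_{tmp}}{j_2-j_1}, producing $N_{ones}$ and $N_{right}$. By the split lemma each call runs in $O(\log\tau)$, so the two together contribute $O(\log\tau)$. I would simultaneously record the remark stated just after that lemma, namely that \bjoinname returns a balanced tree and hence each split yields balanced output; this is what guarantees that $N_{left}$, $N_{ones}$, and $N_{right}$ all have height $O(\log\tau)$, a fact the subsequent joins will rely on.

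Next I would treat the construction of $N_{zeros}$, the tree of $len(N_{ones})$ consecutive zeros that replaces $N_{ones}$. Under the sparse encoding a leaf stores only the positions of its 1's, so a block of zeros is represented by a single leaf recording its length and carrying no 1's; creating it therefore costs $O(1)$ and yields a node of constant height. This is the one step where the hypothesis ``$B$ encodes leaves sparsely'' is used: for a dense (raw bit) representation the same step would cost $\Theta(\tau)$, which is precisely why the statement is restricted to the sparse case. Finally I would bound the two concatenations \bjoin{N_{left}}{N_{zeros}} and the join of that result with $N_{right}$: by the join lemma each costs time proportional to the height difference of its two operands, and since every operand is either a balanced tree over a bit-vector of length at most $\tau$ or the constant-height node $N_{zeros}$, each height difference is $O(\log\tau)$. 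Hence the two joins add only $O(\log\tau)$, and summing the two splits, the $O(1)$ zero-block creation, and the two joins gives the claimed $O(\log\tau)$ bound.

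I expect the only genuinely delicate point to be the $O(1)$ creation of $N_{zeros}$. It is the single step that depends on the sparse representation, and its correctness rests on the observation that a leaf's balancing constraint counts 1's rather than total bits, so that an arbitrarily long run of zeros still fits in one node; I would state this explicitly to make clear why the argument fails for the dense encoding. Once that point is granted, everything else is a routine summation of previously established costs, and no further estimates are needed.
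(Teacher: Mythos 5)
Your proposal is correct and follows essentially the same decomposition as the paper's proof: two $O(\log\tau)$ splits, an $O(1)$ construction of the zero block (which is where the sparse-leaf hypothesis is used, exactly as the paper notes), and joins whose cost is dominated by the splits. Your added justification that the joins are cheap because all operands are balanced with height $O(\log\tau)$ merely spells out what the paper leaves implicit.
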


\begin{proof}
    The \bunsetrange{B} operation calls \bsplitname{} and \bjoinname{} twice.
    It must also create a new tree containing $\bselect{1}{B}{j_2 - 1} - \bselect{1}{B}{j_1}$ 0's to replace the subtree containing $j_2 - j_1$ 1's.
    If leaves of $B$ are represented sparsely, then the creation of a new tree filled with 0's costs $O(1)$ since the resulting tree only has a root node, with its only key having the current length (\bselect{1}{B}{j_2 - 1} - \bselect{1}{B}{j_1}), and an empty leaf.
    Therefore, as the cost of \bsplit{N}{j}, $O(\log{\tau})$, dominates the cost of \bjoin{N_1}{N_2}, the time complexity of \bunsetrange{B} is $O(\log{\tau})$.
\end{proof}

\begin{theorem}
    The primitive \pinsert{(D, A)}{t^-}{t^+} has time complexity $O(\log{\tau})$ when $D$ and $A$ encode leaves sparsely.
\end{theorem}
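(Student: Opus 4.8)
The plan is to obtain the new primitive by modifying Algorithm~\ref{alg:2bitvectors-1}: I would replace its inner \textbf{while} loop (lines~5--9), which removes the $d$ enclosed intervals one at a time for a total cost of $O(d\log\tau)$, with a constant number of \bunsetrange{} calls --- one on $D$ and one on $A$ --- and then bound the cost of the resulting algorithm term by term. Since every other line already runs in $O(\log\tau)$ (or $O(1)$ amortized), the whole primitive collapses to $O(\log\tau)$ once the loop is gone.

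First I would show that the $d$ intervals slated for removal occupy a \emph{contiguous} block of indices in both bit-vectors, so that a single range operation suffices. Because the stored intervals are non-nested, ordering them by departure coincides with ordering them by arrival: if $d_i < d_j$ then necessarily $a_i < a_j$, for otherwise $\I_j \subseteq \I_i$. Hence the $j$-th $1$ of $D$ and the $j$-th $1$ of $A$ always describe the same interval. Recalling that $r_d + bit_d$ counts the departures $\le t_1$ while $r_a$ counts the arrivals $< t_2$, the enclosing intervals $\I''$ with $\I'' \subseteq \I$ --- namely those with departure $\le t_1$ and arrival $\ge t_2$ --- are exactly the intervals with indices $r_a+1, r_a+2, \ldots, r_d+bit_d$, a block of size $d=(r_d+bit_d)-r_a$. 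This is precisely the block that the na\"{i}ve loop erases by repeatedly unsetting the $(r_a+1)$-th $1$.

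Next I would replace the loop with \bunsetrange{D} and \bunsetrange{A}, instantiating $j_1$ and $j_2$ so that their cleared range is exactly this block (with $j_1 = r_a+1$ and $j_2$ chosen to reach index $r_d+bit_d$, matching the endpoint convention of \bunsetrange{}). Because \bunsetrange{B} clears every bit in the positional interval delimited by the $j_1$-th and $j_2$-th $1$'s, and by the previous paragraph the only $1$'s lying there are those of the $d$ enclosed intervals, both calls delete precisely the intended departures and arrivals and leave all other intervals untouched; correctness of the modified primitive therefore reduces to the correctness already established for Algorithm~\ref{alg:2bitvectors-1}. For the complexity I would then add up: the \brank{1}{B}{i} and \baccess{B}{i} computations in lines~1--2 cost $O(\log\tau)$; the two \bunsetrange{} calls cost $O(\log\tau)$ each by Lemma~\ref{lemma:unsetrangecost}, using that $D$ and $A$ encode leaves sparsely; $ensureCapacity$ is $O(1)$ amortized over a sequence of insertions into a bit-vector of length at most $\tau$; and the two closing \bupdate{b}{B}{i} operations cost $O(\log\tau)$. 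Summing a constant number of $O(\log\tau)$ terms yields the bound.

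The main obstacle I anticipate is not the complexity accounting, which is routine given Lemma~\ref{lemma:unsetrangecost}, but the correctness argument that a single index range $[r_a+1,\,r_d+bit_d]$ selects exactly the enclosed intervals in \emph{both} bit-vectors simultaneously. This rests entirely on the non-nesting invariant forcing departures and arrivals into the same order, so I would state that order-preservation fact explicitly, pin down the off-by-one in the $(j_1,j_2)$ endpoints against the exact convention of \bunsetrange{}, and check that the final $ensureCapacity$ and \bupdate{1}{B}{i} steps cannot reintroduce a nesting that the \bunsetrange{} calls were meant to remove.
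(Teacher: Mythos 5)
Your proposal is correct and takes essentially the same route as the paper: replace the iterative unsetting loop of Algorithm~\ref{alg:2bitvectors-1} with \bunsetrange{D} and \bunsetrange{A}, then bound the remaining operations by $O(\log\tau)$ via Lemma~\ref{lemma:unsetrangecost} and the standard dynamic \bv{} costs. Your explicit argument that the $d$ enclosed intervals occupy the contiguous index block $[r_a+1,\,r_d+bit_d]$ in both \bvs{} (because non-nesting forces departure order and arrival order to coincide) is a correctness detail the paper leaves implicit, but it does not change the substance of the argument.
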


\begin{proof}
    Following from Theorem~\ref{theor:compact-insert-cost1} and Lemma~\ref{lemma:unsetrangecost}, the loop in Algorithm~\ref{alg:2bitvectors-1} that iteratively unset $d$ \bv{} bits can be substituted by a call to \bunsetrange{B}.
    As the cost of Algorithm~\ref{alg:2bitvectors-1} is dominated by this loop, its time complexity reduces to $O(\log{\tau})$. 
\end{proof}

\section{Experiments}\label{sec:experiments}

In this section, we conduct experiments to analyze the wall-clock time performance and the space efficiency of data structures when adding new information from synthetic datasets.
In Section~\ref{ssec:isolate-experiments}, we compare our compact data structure that maintain a set of non-nested intervals directly with an in-memory \Btree{} implementation storing intervals as keys.
For our compact data structure, we used dynamic \bvs~\cite{prezza2017framework} with leaves storing bits explicitly as arrays of integer words with words being $64$ bits long.
Internal nodes have a maximum number of pointers to children $m = 32$ and leaf nodes have static \bvs{} with maximum length $l = 4096$.
For the \Btree{} implementation we used $m = 32$ for all nodes.
In Section~\ref{ssec:integrate-experiments}, we compare the overall Temporal Transitive Closure (TTC) data structure using our new compact data structure with the TTC using the \Btree{} implementation for each pair of vertices.
All code is available at \url{https://bitbucket.org/luizufu/zig-ttc/src/master/}.

\subsection{Comparison of data structures for sets of non-nested intervals}\label{ssec:isolate-experiments}

For this experiment, we created datasets containing all $O(\tau^2)$ possible intervals in $[1, \tau]$ for $\tau \in [2^3, 2^{14}]$.
Then, for each dataset, we executed $10$ times a program that shuffles all intervals at random, and inserts them into the tested data structure while gathering the wall-clock time and memory space usage after every insertion.

\begin{figure*}
	\begin{center}
		\begin{tabular}{cc}
			\includegraphics[width=0.45\textwidth]{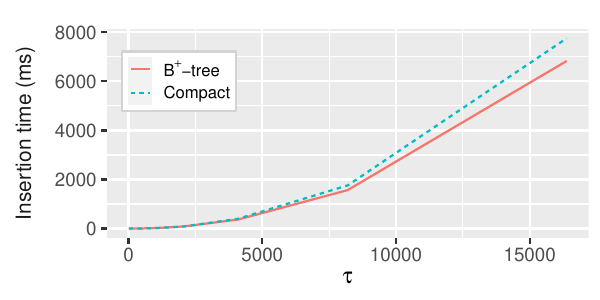} &
			\includegraphics[width=0.45\textwidth, page=60]{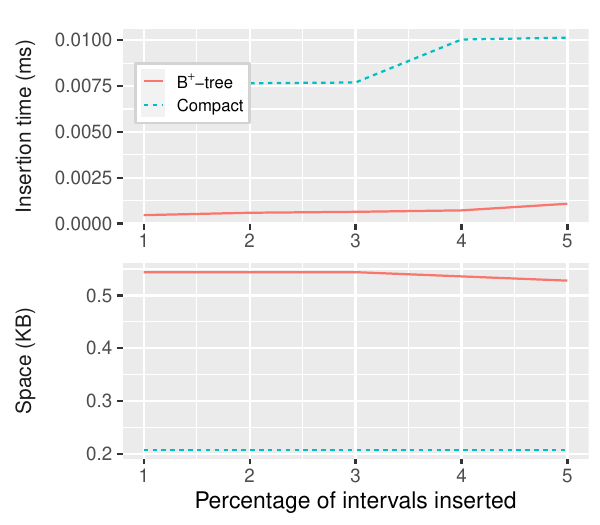} \\
			(a) Overall                                                          &
			(b) Execution for $\tau = 2^{14}$                                              \\
		\end{tabular}
	\end{center}
	\caption{
		Comparison of incremental data structures to represent a set of non-nested intervals.
		In~(a), the overall average wall-clock time to insert all possible $O(\tau)$ intervals randomly shuffled into data structures.
		In~(b), the cumulative wall-clock time and the memory space usage to insert all possible $O(\tau)$ intervals randomly shuffled throughout a single execution.
		Note that the final wall-clock time of the execution described in~(b) was one of the $10$ executions with $\tau = 2^{14}$ used to construct~(a).
	}
	\label{fig:isolate-experiments}
\end{figure*}

Figure~\ref{fig:isolate-experiments}(a) shows the average wall-clock time to insert all intervals into the both data structures as $\tau$ increases.
Figure~\ref{fig:isolate-experiments}(b) shows the cumulative wall-clock time to insert all intervals and the memory usage throughout the lifetime of a single program execution with $\tau = 2^{14}$.
As shown in Figure~\ref{fig:isolate-experiments}(a), our new data structure slightly underperforms when compared with the \Btree{} implementation.
However, as shown in Figure~\ref{fig:isolate-experiments}(b), the wall-clock time have a higher overhead at the beginning of the execution (first quartile) and, after that, the difference between both data structures remains almost constant.
This overhead might be due to insertions of 0's at the end of the \bvs{} in order to make enough space to accommodate the rightmost interval inserted so far.
We can also see in Figure~\ref{fig:isolate-experiments}(b) that the space usage of our new data structure is much smaller than the \Btree{} implementation.
It is worth noting that, if the set of intervals is very sparse, maybe the use of sparse \bv{} as leaves could decrease the space since it does not need to preallocate most of the tree nodes, however, the wall-clock time could increase since at every operation leaves need to be decoded/unpacked and encoded/packed.

\subsection{Comparison of data structures for Time Transitive Closures}\label{ssec:integrate-experiments}

For this experiment, we created datasets containing all $O(n^2\tau)$ possible contacts fixing the number of vertices $n = 32$ and the latency to traverse an edge $\delta = 1$ while varying $\tau = [2^3, 2^{14}]$.
Then, for each dataset, we executed $10$ times a program that shuffles all contacts at random, and inserts them into the tested TTC data structure while gathering the wall-clock time and memory space usage after every insertion.

\begin{figure*}
	\begin{center}
		\begin{tabular}{cc}
			\includegraphics[width=0.45\textwidth, page=3]{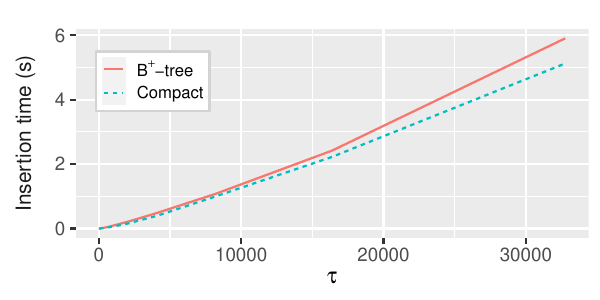} &
			\includegraphics[width=0.45\textwidth, page=190]{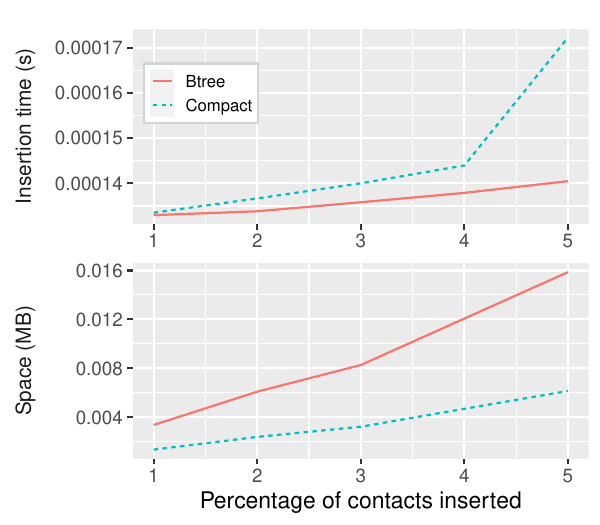} \\
			(a) Overall                                                                 &
			(b) Execution for $n = 32$ and $\tau = 2^{14}$                                 \\
		\end{tabular}
	\end{center}
	\caption{
        Comparison of Temporal Transitive Closures (TTCs) using incremental data structures to represent sets of non-nested intervals for each pair of vertices.
		In~(a), the overall average wall-clock time to insert all possible $O(n^2\tau)$ contacts randomly shuffled into data structures.
		In~(b), the cumulative wall-clock time and the memory space usage to insert all possible $O(n^2\tau)$ contacts randomly shuffled throughout a single execution.
		Note that the final wall-clock time of the execution described in~(b) was one of the $10$ executions with $\tau = 2^{14}$ used to construct~(a).
	}
	\label{fig:ttc-experiments}
\end{figure*}

Figure~\ref{fig:ttc-experiments}(a) shows the average wall-clock time to insert all contacts into the TTCs using both data structures as $\tau$ increases.
Figure~\ref{fig:ttc-experiments}(b) shows the cumulative wall-clock time to insert all contacts and the memory usage throughout the lifetime of a single program execution with $n = 32$ and $\tau = 2^{14}$.
As shown in Figure~\ref{fig:ttc-experiments}(a), the TTC version that uses our compact data structure in fact outperforms when compared with the TTC that uses the \Btree{} implementation for large values of $\tau$.
In Figure~\ref{fig:ttc-experiments}(b), we can see that the time to insert a contact into the TTC using our new data structure is lower during almost all lifetime, and the space usage followed the previous experiment comparing data structures in isolation.

\section{Conclusion and open questions}\label{sec:conclusions}

We presented in this paper an incremental compact data structure to represent a set of non-nested time intervals.
This new data structure is composed by two dynamic \bvs{} and works well using common operations on dynamic \bvs.
Among the operations of our new data structures are:
\pprev{(A, D)}{t}, which retrieves the previous interval $[t_1, t_2]$ such that $t_2 \leq t$ in time $O(\log\tau)$;
\pnext{(A, D)}{t}, which retrieves the next interval $[t_1, t_2]$ such that $t_1 \geq t$ also in time $O(\log\tau)$; and
\pinsert{(A, D)}{t_1}{t_2}, which inserts a new interval $\I = [t_1, t_2]$ if no other interval $\I'$ such that $\I \subseteq \I'$ exists while removing all intervals $\I''$ such that $\I \subseteq \I'' $ in time $O(d\log\tau)$, where $d$ is the number of intervals removed.
Moreover, we introduced a new operation \bunsetrange{B} for dynamic \bvs{} that encode leaves sparsely, which we used to improve the time complexity of our insert algorithm to $O(\log{\tau})$.

Additionally, we used our new data structure to incrementally maintain Temporal Transitive Closures (TTCs) using much less space
We used the same strategy as described in~\cite{brito2022dynamic}, however, instead of using Binary Search Trees (BSTs), we used our new compact data structure.
The time complexities of our algorithms for the new data structure are the same as those for BSTs.
However, as we showed in our experiments, using our new data structure greatly reduced the space usage for TTCs in several cases and, as they suggest, the wall-clock time to insert new contacts also improves as $\tau$ increases.

For future investigations, we conjecture that our compact data structure can be simplified further so that the content of both its \bvs{} are merged into a single data structure.
Our current insertion algorithm duplicates most operations in order to update both \bvs.
Furthermore, each of these operations traverse a tree-like data structure from top to bottom.
With a single tree-like data structure, our insertion algorithm could halve the number of traversals and, maybe, benefit from a better spatial locality.
In another direction, our algorithm for \pinsert{(A, D)}{t_1}{t_2} only has time complexity $O(\log{\tau})$ when both $A$ and $D$ encode leaves sparsely.
Perhaps, a dynamic \bv{} data structure that holds a mix of leaves represented densely or sparsely can be employed to retain the $O(\log{\tau})$ complexity while improving the overall runtime for other operations.
Lastly, we expect soon to evaluate our new compact data structure on larger datasets and under other scenarios; for instance, in very sparse and real temporal graphs.

\begin{paragraph}{Acknowledgements}
  This study was financed in part by Funda\c{c}\~{a}o de Amparo \`{a} Pesquisa do Estado de Minas Gerais (FAPEMIG) and the Coordena\c{c}\~{a}o de Aperfei\c{c}oamento de Pessoal de N\'{i}vel Superior - Brasil (CAPES) - Finance Code 001* - under the ``CAPES PrInt program'' awarded to the Computer Science Post-graduate Program of the Federal University of Uberl\^{a}ndia.
\end{paragraph}

\printbibliography{}

\end{document}